\documentclass[11pt]{article}

\usepackage[a4paper,margin=1in]{geometry}
\usepackage[T1]{fontenc}
\usepackage[utf8]{inputenc}
\usepackage{lmodern}
\usepackage{amsmath,amssymb,amsthm,mathtools}
\usepackage{booktabs}
\usepackage{enumitem}
\usepackage{cite}
\usepackage{microtype}
\usepackage{xcolor}
\usepackage{comment}
\usepackage[colorlinks=true,citecolor=blue,linkcolor=blue,urlcolor=blue]{hyperref}

\allowdisplaybreaks[3]
\numberwithin{equation}{section}

\AtBeginDocument{%
  \setlength{\abovedisplayskip}{6pt plus 2pt minus 3pt}%
  \setlength{\belowdisplayskip}{6pt plus 2pt minus 3pt}%
  \setlength{\abovedisplayshortskip}{4pt plus 2pt minus 2pt}%
  \setlength{\belowdisplayshortskip}{4pt plus 2pt minus 2pt}%
}

\setlist{itemsep=2pt,topsep=4pt,leftmargin=*}

\makeatletter
\renewcommand\section{\@startsection{section}{1}{\z@}%
  {-1.8ex plus -0.5ex minus -0.2ex}%
  {0.8ex plus 0.2ex}%
  {\normalfont\large\bfseries}}
\renewcommand\subsection{\@startsection{subsection}{2}{\z@}%
  {-1.4ex plus -0.4ex minus -0.2ex}%
  {0.6ex plus 0.2ex}%
  {\normalfont\normalsize\bfseries}}
\makeatother

\theoremstyle{plain}
\newtheorem{theorem}{Theorem}[section]

\newtheorem{lemma}[theorem]{Lemma}

\theoremstyle{definition}
\newtheorem{definition}[theorem]{Definition}

\theoremstyle{remark}
\newtheorem{remark}[theorem]{Remark}

\title{Polynomially Improved Lower Bounds for Trifferent Codes via Locally Sparse $3$-Uniform Hypergraphs%
\thanks{This research was supported by the National Key Research and Development Program of China under Grant 2025YFC3409900, the National Natural Science Foundation of China under Grant 12231014, and the Beijing Scholars Program. \emph{(Corresponding author: Yubo Sun.)}}}

\renewcommand{\thefootnote}{\fnsymbol{footnote}}

\author{
Xuejiao Han\thanks{School of Mathematical Sciences, Capital Normal University, Beijing 100048, China. Email: \texttt{hanxj4425@126.com}.}
\and
Yubo Sun\thanks{Institute of Mathematics and Interdisciplinary Sciences, Xidian University, Xi'an 710126, China. Email: \texttt{ybsun@cnu.edu.cn}.}
\and
Gennian Ge\thanks{School of Mathematical Sciences, Capital Normal University, Beijing 100048, China. Email: \texttt{gnge@zju.edu.cn}.}
}

\date{}

\begin{document}

\maketitle

\setcounter{footnote}{0}
\renewcommand{\thefootnote}{\arabic{footnote}}

\begin{abstract}
A ternary code is  \emph{trifferent} if every three distinct codewords have a coordinate in which their symbols are pairwise distinct. Let $T(n)$ be the maximum size of a trifferent code of length $n$. The classical K\"orner--Marton construction gives $T(n)\ge c_0(9/5)^{n/4}$ for an absolute constant $c_0>0$. We prove the polynomial strengthening
$T(n)\ge c\sqrt{n}(9/5)^{n/4}$ for an absolute constant $c>0$. Our proof refines the outer-code step in the K\"orner--Marton concatenation. We encode non separating triples as edges of a $3$-uniform hypergraph, randomly thin its vertex set, and remove high-degree vertices together with all remaining Berge cycles of lengths two and three. The resulting locally sparse hypergraph admits a large independent set by a theorem of Verstraete and Wilson, producing the additional factor $\sqrt n$. Concatenation with the length-four Tetra code then yields the stated lower bound.
\end{abstract}

\noindent\textbf{Keywords:}
Trifferent code; perfect hash code; locally sparse hypergraph.

\section{Introduction}

Let $[q]:=\{1,\ldots,q\}$ be an alphabet of size $q$, and let $[q]^n$ denote the set of length-$n$ words over the alphabet $[q]$. A code $\mathcal C\subseteq[q]^n$ is called a \emph{$q$-perfect hash code} if, for every $q$ distinct codewords $x_1,\ldots,x_q\in\mathcal C$, there exists a coordinate $i\in[n]$ such that $\{x_1(i),\ldots,x_q(i)\}=[q]$. Equivalently, every $q$-tuple of distinct codewords is separated in at least one coordinate.

Perfect hash codes arise naturally in theoretical computer science \cite{KM88,WX01}, cryptography \cite{BBKT10,Zaverucha10}, and zero-error list decoding \cite{Elias88,BR22}. They are also connected to several apparently unrelated objects in finite geometry and coding theory, including strong blocking sets \cite{BDGP24,ABDN24}, minimal codes \cite{BDGP24,ABNR22}, and covering codes \cite{HN21}.

The case $q=2$ is immediate, since every pair of distinct binary words differs in some coordinate.  The first nontrivial, and by far the most difficult, case is therefore $q=3$.  A perfect $3$-hash code over the ternary alphabet is commonly called a \emph{trifferent code}.  Thus, a code $\mathcal C\subseteq [3]^n$ is trifferent if, for every three distinct
codewords $x,y,z\in\mathcal C$, there exists $i\in[n]$ such that $\{x(i),y(i),z(i)\}=[3]$.  Let $T(n)$ denote the maximum cardinality of a trifferent code of
block length $n$.  Determining $T(n)$, or even its asymptotic order of growth, is known as the \emph{trifference problem}.  Despite its elementary formulation, the problem remains open, and progress toward its resolution has been remarkably slow.

\subsection{Classical Bounds and Recent Progress}

A classical pruning argument gives
\begin{equation}\label{eq:intro-elias}
    T(n)\le 2\left(\frac32\right)^n.
\end{equation}
Indeed, in each coordinate one may delete the least frequent symbol class while retaining at least two thirds of the current code.  After all $n$ coordinates have been processed, at least $(2/3)^n|\mathcal C|$ codewords remain, but at most two symbols occur in each coordinate.  Consequently, no three surviving codewords can triffer, and hence at most two codewords remain.

This argument goes back to the early work of K\"orner \cite{Korner73} and Elias \cite{Elias88}.  Although elementary, it remains the best known general upper bound at the exponential scale. For $q\ge4$, the analogous elementary bounds for perfect hash codes admit exponential improvements \cite{FK84,Arikan94,DGR20,GR22}.  By contrast, improving the corresponding bound in the ternary case has proved notoriously difficult.

The earliest improvements to Equation \eqref{eq:intro-elias} concerned only the leading constant. Fiore, Gnutti, and Polak \cite{DFGP22} determined the exact values of $T(n)$ for $n=5,6$ and used them to improve the general bound.  Kurz \cite{Kur24} subsequently extended the finite-length analysis and obtained a further constant-factor improvement.  More recently, Bhandari and Khetan \cite{BK25} established the first improvement by an unbounded factor:
\begin{equation}\label{eq:intro-bk}
    T(n)\le Cn^{-2/5}\left(\frac32\right)^n
\end{equation}
for some absolute constant $C>0$.  Their argument reduces the problem to constant-weight layers in which each codeword contains a prescribed number of occurrences of one symbol, encodes these layers by suitable auxiliary graphs, and then applies forbidden-subgraph estimates, in particular the K\H{o}v\'ari--S\'os--Tur\'an theorem.  The bound in Equation \eqref{eq:intro-bk} represents a major advance, although it leaves the exponential base $3/2$ unchanged.

The best previously known lower bound is due to K\"orner and Marton \cite{KM88}.  Their construction concatenates a random outer code over an alphabet of size $9$ with a ternary inner code of length $4$, known as the Tetra code, and yields
\begin{equation}\label{eq:intro-km}
    T(n)\ge c_0\left(\frac95\right)^{n/4}
\end{equation}
for some absolute constant $c_0>0$.  Thus, the currently known exponential bases are $ (9/5)^{1/4}\approx 1.1583 $ from below and $3/2$ from above.

Several variants of the trifference problem have also attracted recent attention.  For integers $1\le m\le n$, let $T(n,m)$ denote the maximum size of a ternary code in which every three distinct codewords triffer in at least $m$ coordinates.  Bishnoi, Kielak, Kov\'acs, Nagy, Somlai, Vizer, and Zheng \cite{BKKNSVZ26} obtained bounds for $T(n,m)$ in several parameter ranges and identified a phase transition near $m=2n/9$.  More precisely, for every fixed $\varepsilon>0$, the quantity $T(n,m)$ is bounded independently of $n$ when $m>(2/9+\varepsilon)n$, whereas it grows exponentially with $n$ when $m<(2/9-\varepsilon)n$.  The threshold $2/9$ is natural, since three independent uniformly distributed ternary symbols are pairwise distinct with probability $2/9$.

Another important restriction is obtained by requiring the code to be linear.  Let $T_{\mathrm L}(n)$ be the maximum size of a linear trifferent code. Pohoata and Zakharov \cite{PZ22} proved the first exponential improvement over the pruning bound in this setting: for some absolute $\varepsilon>0$, every sufficiently long linear trifferent code has dimension at most $(1/4-\varepsilon)n$. Bishnoi, D'haeseleer, Gijswijt, and Potukuchi \cite{BDGP24} subsequently revealed a useful geometric structure behind the linear problem.  They proved that a ternary linear code is trifferent if and only if it is a minimal code; through generator matrices, this is also equivalent to a strong blocking set in a finite projective space.  By combining this equivalence with coding-theoretic bounds, they obtained, for large $n$,
\begin{equation}\label{eq:intro-linear}
    \frac13\left(\frac95\right)^{n/4}
    \le T_{\mathrm L}(n)
    \le 3^{n/4.55}.
\end{equation}
In particular, up to a constant factor, the exponential lower bound of K\"orner and Marton can already be achieved within the class of linear codes. Fiore and Dalai \cite{DFD24} later gave a streamlined coding-theoretic treatment and extended the method to linear $q$-ary $k$-hash codes for general $q\ge k\ge3$.

\subsection{Main Results}

In contrast to the recent advances in upper bounds and linear variants, no improvement of the K\"orner--Marton lower bound in Equation \eqref{eq:intro-km} by an unbounded multiplicative factor was previously known.  Our main result provides the first polynomial strengthening of this bound.

\begin{theorem}\label{thm:main}
There exists an absolute constant $c>0$ such that, for all sufficiently large $n$,
\[
    T(n)
    \ge
    c\sqrt{n}\left(\frac95\right)^{n/4}.
\]
\end{theorem}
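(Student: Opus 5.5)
We follow the K\"orner--Marton concatenation, improving only the outer code. Let $m=\lfloor n/4\rfloor$ and work over the alphabet $[9]$, identified with the nine codewords of the Tetra code $\mathcal T\subseteq[3]^4$. The Tetra code has the property that for any three symbols $a,b,c\in[9]$ that are \emph{pairwise distinct}, the corresponding Tetra codewords triffer in some coordinate. Hence if $\mathcal D\subseteq[9]^m$ is an outer code in which every three distinct codewords admit a coordinate where their symbols are pairwise distinct (a $9$-ary ``trifferent'' condition), then the concatenation $\mathcal D\circ\mathcal T$ is a trifferent code of length $4m\le n$ (pad with constant coordinates). It therefore suffices to construct such $\mathcal D$ with $|\mathcal D|\ge c\sqrt m\,(9/5)^m$.

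Call a triple $\{x,y,z\}$ of distinct words in $[9]^m$ \emph{bad} if in every coordinate two of the symbols coincide. For uniform independent words, a coordinate is not separating with probability $1-\frac{9\cdot8\cdot7}{729}=\frac{25}{81}$, so a fixed triple is bad with probability $p_3=(25/81)^m$. Define the $3$-uniform hypergraph $H$ on a random vertex set $V$ of $N$ independent uniform words, with edges the bad triples; an independent set of $H$ (after discarding repeated words) is a valid outer code. The plain deletion method gives $\alpha\gtrsim N/\sqrt{N^2p_3}$, i.e.\ $(9/5)^m$; to gain $\sqrt m$ we instead invoke the Verstraete--Wilson theorem: a $3$-uniform hypergraph with maximum degree at most $d$ and no Berge cycles of length $2$ or $3$ has an independent set of size $\Omega\bigl(\frac{N}{\sqrt d}\sqrt{\log d}\bigr)$. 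Since $\log d=\Theta(m)$ for our parameters, this gives the extra $\sqrt m$.

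Concretely, choose $N$ so that the expected degree $d_0\approx N^2p_3/2$ is $e^{\varepsilon m}$ for a small constant $\varepsilon>0$, i.e.\ $N\approx e^{\varepsilon m/2}(81/25)^{m/2}=e^{\varepsilon m/2}(9/5)^m$. The key steps are then: (i) bound the expected number of coincident words (negligible since $N^2\ll 9^m$); (ii) show most vertices have degree at most $2d_0$, by Markov or a second-moment bound, and delete the rest; (iii) bound the expected number of Berge $2$-cycles (pairs of edges sharing two vertices) and Berge $3$-cycles, and delete one vertex from each; (iv) apply Verstraete--Wilson to the remainder of size $\ge N/2$ to get an independent set of size $\Omega(N\sqrt{m}/\sqrt{d_0})=\Omega(\sqrt m\,(9/5)^m)$; (v) concatenate.

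The main obstacle is step (iii): short Berge cycles involve correlated events. A $2$-cycle $\{x,y,z\},\{x,y,w\}$ requires, coordinatewise, joint non-separation; the per-coordinate probability that both $(x,y,z)$ and $(x,y,w)$ are non-separating is $P_2=\Pr[x_i=y_i]+\Pr[x_i\ne y_i]\cdot(2/9)^2=\frac19+\frac89\cdot\frac4{81}$, so the expected count is $N^4P_2^m$, and we need this $\ll N$, i.e.\ $N^3P_2^m\ll1$. Similarly for $3$-cycles (on $6$ vertices in the loose case, fewer in degenerate ones) we need $N^6P_3^m\ll N$ for the appropriate per-coordinate joint probability $P_3$. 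I would first compute these single-coordinate joint probabilities exactly by enumeration of patterns on $[9]$ and check that $N^{k-1}$ times them decays exponentially when $N=(9/5)^{m(1+\delta)}$; if some configuration is borderline, I would remedy it by the random thinning of the vertex set (taking $N$ smaller by a factor $e^{\varepsilon m}$ relative to a denser initial sample), exploiting that these cycle counts scale with higher powers of $N$ than the edge count, so a small enough $\varepsilon$ keeps them $o(N)$ while preserving $\log d=\Theta(m)$.
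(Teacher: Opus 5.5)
Your plan is correct and is essentially the paper's proof. The paper keeps each word of $[9]^N$ independently with probability $\beta^N$ ($\beta=0.205$, so about $(9\beta)^N$ vertices with degree scale $(25\beta^2)^N$, which matches your $N\approx e^{\varepsilon m/2}(9/5)^m$ and $d_0\approx e^{\varepsilon m}$), deletes high-degree vertices and one vertex from each Berge $2$-cycle and six-vertex Berge $3$-cycle (degenerate $3$-cycles always contain a Berge $2$-cycle), applies Verstraete--Wilson, and concatenates with the Tetra code.

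The computation you left open is $P_3=\operatorname{tr}\bigl((7I+2J)^3\bigr)/9^6=18369/9^6$, which gives $N^5P_3^m=e^{5\varepsilon m/2}(18369/28125)^m\to0$ alongside $N^3P_2^m=e^{3\varepsilon m/2}(113/125)^m\to0$ for small $\varepsilon$. So no fallback is needed, and thinning could not have rescued a per-coordinate ratio $\ge1$ anyway, since $N$ cannot drop below $(9/5)^m$.
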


The proof retains the Tetra code as the inner code and improves the outer-code component of the K\"orner--Marton construction.  To state the corresponding result for outer codes, we call a code $\mathcal D\subseteq[M]^N$ an \emph{$M$-ary perfect $3$-hash code} if, for every three distinct codewords in $\mathcal D$, there exists a coordinate in which their symbols are pairwise distinct. We denote by $H_{M,3}(N)$ the maximum size of an $M$-ary perfect $3$-hash code of length $N$.  Thus, $T(n)=H_{3,3}(n)$.

The Tetra code is the ternary code
\[
    \mathcal T
    :=
    \bigl\{
        (a,b,a+b,a-b):
        a,b\in [3]
    \bigr\}
    \subseteq [3]^4.
\]
It has cardinality $9$ and is itself trifferent \cite[Th.~1]{KM88}.  Consequently, the standard concatenation argument gives
\begin{equation}\label{eq:intro-concat}
    T(4N)\ge H_{9,3}(N)
\end{equation}
for every positive integer $N$.  The main technical result of this paper is the following lower bound on the outer code.

\begin{theorem}\label{thm:outer-main}
There exists an absolute constant $c_1>0$ such that, for all
sufficiently large $N$,
\[
    H_{9,3}(N)
    \ge
    c_1\sqrt{N}\left(\frac95\right)^N.
\]
\end{theorem}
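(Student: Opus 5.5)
We follow the route sketched in the abstract: encode the non-separated triples of a random outer code as a $3$-uniform hypergraph, make that hypergraph locally sparse, and then apply the Verstraete--Wilson theorem.

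\textbf{The hypergraph.} For $x,y,z\in[9]^N$, call the triple \emph{bad} if no coordinate carries three pairwise distinct symbols. For fixed $x$, a uniformly random pair $(y,z)$ makes the triple bad with probability $(1-56/81)^N=(25/81)^N$, since $9\cdot8\cdot7/9^3=56/81$. Choose a random set $V\subseteq[9]^N$ by including each word independently with probability $p$, and let $\mathcal H$ be the $3$-uniform hypergraph on $V$ whose edges are the bad triples. Then
\[
\mathbb E|V|=p\,9^N,\qquad \mathbb E\,e(\mathcal H)\approx p^3\,9^{3N}(25/81)^N/6 .
\]
The expected degree is therefore $d\asymp p^2\,81^N(25/81)^N=p^2\,25^N$. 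Choosing $p=\Theta(\sqrt{K}\,5^{-N})$ gives average degree about $K$ and $|V|\asymp \sqrt K(9/5)^N$. Any independent set of $\mathcal H$ is a perfect $3$-hash code.

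\textbf{Local sparsity.} The Verstraete--Wilson theorem says that a $3$-uniform hypergraph on $m$ vertices with maximum degree at most $d$ and no Berge $2$- or $3$-cycles has an independent set of size $\Omega\bigl(m\sqrt{\log d/d}\bigr)$. The main task is to delete $o(|V|)$ vertices so that the remaining hypergraph has maximum degree $O(K)$ and no Berge cycles of length $2$ or $3$.

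\begin{enumerate}
\item \emph{Two edges sharing two vertices.} Count configurations $\{x,y,z\},\{x,y,w\}$ with both triples bad. For fixed $x,y$ with $h=d_H(x,y)$, the probability that a random $z$ makes $\{x,y,z\}$ bad is $(1/9)^{N-h}(2/9)^h$, because $z$ must lie in $\{x_i,y_i\}$ at each coordinate. Summing over $x,y,z,w$ with weight $p^4$ gives
\[
\mathbb E[\#\text{2-cycles}]\approx p^4\,9^{2N}\sum_{h}\binom Nh 8^h\Bigl(\tfrac{1}{9}\Bigr)^{2N}\Bigl(\tfrac{4}{1}\Bigr)^h
= p^4\,(33)^N\cdot\text{const}.
\]
Since $p^4\approx K^2 5^{-4N}=K^2/625^N$, this is $(33/625)^N\cdot 9^N\cdot p\ll p\,9^N$. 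These configurations are therefore negligible; we delete one vertex from each.
\item \emph{Berge $3$-cycles.} A $3$-cycle is a configuration such as $\{x,y,a\},\{y,z,b\},\{z,x,c\}$. There are several shapes to enumerate. In each case we bound the expectation by multiplying independent badness probabilities, conditioned on the pair distances, and aim for a total of $o(p\,9^N)$.
\item \emph{High degree.} Degrees are sums of weakly dependent indicators. We delete every vertex of degree exceeding $CK$. By Markov's inequality, applied to $\mathbb E[\deg\cdot\mathbf 1_{\deg>CK}]$ through a second-moment bound, this removes only a small fraction of the vertices.
\end{enumerate}

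After the deletions we take $K$ as large as the cycle counts permit. To gain $\sqrt N$ we need $\sqrt{K}\cdot\sqrt{\log K/K}$ to grow, which fails for constant $K$. The source of the gain is therefore different: the bound $m\sqrt{\log d/d}$ with $m=\sqrt K(9/5)^N$ gives $(9/5)^N\sqrt{\log K}$. We need $\log K\asymp N$, meaning $K$ exponential in $N$, say $K=\gamma^N$ with $\gamma>1$ small, so that $p$ is larger. The cycle counts must then be rechecked against $p^4$, $p^5$, and $p^6$ with $p=\gamma^{N/2}5^{-N}$. The $2$-cycle bound above has exponential slack, so a small $\gamma$ survives.

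\textbf{Main obstacle.} The crux is the Berge $3$-cycle count. Three bad triples pairwise sharing vertices impose correlated constraints, and the expectation must be computed as an explicit exponential sum over coordinate patterns. That is, for each coordinate one classifies the joint symbol pattern of the $6$ words and verifies that the per-coordinate generating function gives a base strictly below what is allowed once $\gamma$ is chosen small. I would first reduce each shape to a per-coordinate count, namely the number of $6$-tuples in $[9]^6$ in which none of the three triples is rainbow. I would then verify numerically that the resulting growth rate is below $5^{6}\cdot 9/\gamma^{3}$ divided appropriately. Once every shape has exponential slack, fix $\gamma$, delete, and apply Verstraete--Wilson to obtain $H_{9,3}(N)\ge c_1\sqrt N(9/5)^N$.
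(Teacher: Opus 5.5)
Your strategy is the paper's: thin $[9]^N$ with $p=\beta^N$ and $25\beta^2>1$ (your $\beta=\sqrt\gamma/5$), delete high-degree vertices and one vertex from each short Berge cycle, then apply Verstraete--Wilson with $\log\Delta=\Theta(N)$. The gap is at the step you yourself call the crux. The Berge $3$-cycle expectation is never bounded; you only say you would ``verify numerically'' that it is small. This is not a formality. Near $\beta=1/5$ the method needs the per-coordinate count $K_3$ of the six-vertex configuration to satisfy $K_3<9\cdot5^5=28125$, so that $p^6K_3^N=o(p\,9^N)$. Nothing in your proposal shows this.

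The one count you do carry out is also wrong. When $x_i=y_i$ the coordinate is already non-rainbow, so $z_i$ is unrestricted ($9$ choices, probability $1$); it is not confined to $\{x_i,y_i\}$. The correct number of labelled $2$-cycle configurations is $(9\cdot81+72\cdot4)^N=1017^N=(9\cdot113)^N$, not $33^N$. The resulting condition $113\beta^3<1$, i.e.\ $\beta<113^{-1/3}\approx0.2069$, is in fact the binding constraint. So there is only a narrow window $1/5<\beta<0.2069$, not the large slack you claim; your conclusion survives, but for the wrong reason.

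The missing idea that makes both counts immediate is a transfer matrix. Write a cyclic configuration as $e_i=\{v_i,v_{i+1},u_i\}$, indices mod $j$. In one coordinate, once the symbols $a_i,a_{i+1}$ of $v_i,v_{i+1}$ are fixed, the number of admissible symbols for $u_i$ is $9$ if $a_i=a_{i+1}$ and $2$ otherwise. These are the entries of $M=7I+2J$, whose eigenvalues are $25$ (once) and $7$ (eight times). Hence the per-coordinate count is $\operatorname{tr}(M^j)=25^j+8\cdot7^j$, and these counts multiply over the $N$ coordinates. This gives $K_2=1017$ and $K_3=18369<28125$. The three conditions $25\beta^2>1$, $113\beta^3<1$ and $\tfrac{18369}{9}\beta^5<1$ then hold simultaneously, for example at the paper's choice $\beta=0.205$.

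You also do not need to enumerate ``several shapes''. Once Berge $2$-cycles are removed, two edges of a $3$-uniform hypergraph meet in at most one vertex, so every remaining Berge $3$-cycle spans exactly six vertices, and only that shape has to be counted. For the degree step, plain Markov applied to $\mathbb{E}[d(x)\mid x\in S]\le p^2\,25^N$ already suffices; no second-moment argument is needed.
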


Theorem~\ref{thm:main} follows immediately from Theorem~\ref{thm:outer-main} and Equation \eqref{eq:intro-concat} when $4\mid n$.  The result for arbitrary sufficiently large $n$ is obtained by taking $N=\lfloor n/4\rfloor$ and appending fixed coordinates to the resulting ternary code.

\subsection{Organization}

The remainder of the paper is organized as follows. Section~\ref{sec:hypergraph} introduces the bad-triple hypergraph and the associated independent-set formulation, and outlines our proof strategy.  Section~\ref{sec:counts} establishes the degree and short-cycle estimates.  Section~\ref{sec:thinning} performs the random thinning and local sparsification.
Section~\ref{sec:proof-main} proves Theorem~\ref{thm:outer-main} and derives Theorem~\ref{thm:main}.  Finally, Section~\ref{sec:conclusion} summarizes the results and discusses
directions for further work.

\section{The Bad-Triple Hypergraph and Proof Strategy}
\label{sec:hypergraph}

In this section, we introduce the hypergraph formulation of the outer-code problem and outline the proof of Theorem~\ref{thm:outer-main}.  We begin with the basic terminology and
the independent-set estimate that will be used later.

\subsection{Hypergraph Preliminaries}

A \emph{hypergraph} is a pair $\mathcal H=(V(\mathcal H),E(\mathcal H))$, where $V(\mathcal H)$ is the \emph{vertex set} and $E(\mathcal H)\subseteq 2^{V(\mathcal H)}$ is the \emph{edge set}.  The hypergraph $\mathcal H$ is called \emph{$r$-uniform} if every edge of $\mathcal H$ has cardinality $r$.

For a vertex $v\in V(\mathcal H)$, its \emph{degree} in $\mathcal H$ is 
\[
    d_{\mathcal H}(v):=|\{e\in E(\mathcal H):v\in e\}|.
\]
The \emph{maximum degree} of $\mathcal H$ is 
\[
    \Delta(\mathcal H):=\max_{v\in V(\mathcal H)}d_{\mathcal H}(v).
\]
A set $I\subseteq V(\mathcal H)$ is called an \emph{independent set} in $\mathcal H$ if it contains no edge of
$\mathcal H$; equivalently, $e\nsubseteq I$ for every $e\in E$. The \emph{independence number} of $\mathcal H$, denoted by $\alpha(\mathcal H)$, is the maximum cardinality of an independent set:
\[
    \alpha(\mathcal H)
    :=
    \max\left\{
        |I|:
        I\subseteq V
        \text{ is independent in }\mathcal H
    \right\}.
\]

The independent-set theorem used in our argument requires a locally sparse condition.  We formulate this condition in terms of Berge cycles.

\begin{definition}\label{def:berge}
Let $j\ge2$.  A \emph{Berge $j$-cycle} in a hypergraph consists of pairwise distinct edges $e_1,\ldots,e_j$ and pairwise distinct vertices $v_1,\ldots,v_j$ such that $v_i\in e_i\cap e_{i+1}$ for every $i\in[j]$, where the indices are interpreted cyclically, so that $e_{j+1}=e_1$. A hypergraph is called \emph{locally sparse} if it contains no Berge
cycles of length $2$ or $3$.
\end{definition}

We shall use the following $3$-uniform specialization of an independent-set theorem due to Verstraete and Wilson \cite{VW26}.

\begin{theorem}[Verstraete--Wilson {\cite{VW26}}]
\label{thm:local-sparse}
There exists an absolute constant $c_{\mathrm{VW}}>0$ such that the following holds.  Let $\mathcal G$ be a locally sparse $3$-uniform hypergraph on $M$ vertices, and suppose that $\Delta(\mathcal G)\le \Delta$ for some $\Delta\ge2$.  Then
\[
    \alpha(\mathcal G)
    \ge
    c_{\mathrm{VW}}\,M
    \sqrt{\frac{\log\Delta}{\Delta}}.
\]
\end{theorem}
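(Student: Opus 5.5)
Since Theorem~\ref{thm:local-sparse} is imported from \cite{VW26}, the paper does not need its own proof. If I had to prove it, I would use a local-occupancy (hard-core model) argument in the style of Davies--Jenssen--Perkins--Roberts, adapted to $3$-uniform hypergraphs. The aim is to beat the trivial bound $M/\sqrt{\Delta}$ by a factor $\sqrt{\log\Delta}$.

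\textbf{Trivial bound first.} Keep each vertex independently with probability $p=\varepsilon/\sqrt{\Delta}$. Then delete one vertex from every surviving edge. This leaves about $pM-p^3M\Delta/3\gtrsim M/\sqrt{\Delta}$ vertices.

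\textbf{Setup for the logarithmic gain.} Let $I$ be a random independent set drawn with probability proportional to $\lambda^{|I|}$, with $\lambda$ to be tuned (a small power of $\Delta$). For a vertex $v$, its link $L_v=\{e\setminus\{v\}: v\in e\}$ has at most $\Delta$ pairs. Absence of Berge $2$-cycles means two edges share at most one vertex, so these pairs are pairwise disjoint and $L_v$ is a matching. Absence of Berge $3$-cycles means no edge meets two distinct pairs of $L_v$ outside $v$. Hence, once we condition on $I\setminus N[v]$, the restrictions of $I$ to the different pairs of $L_v$ interact only through longer cycles, and I would argue they are conditionally independent. This is where local sparseness is essential, and I expect it to be the main technical point.

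\textbf{Key steps.}
\begin{enumerate}
\item Condition on $J=I\setminus(\{v\}\cup\bigcup L_v)$. Let $k$ be the number of link pairs $\{a,b\}$ whose vertices are both ``uncovered'', meaning neither $a$ nor $b$ completes an edge with $J$.
\item Given $J$, the spin of $v$ and of the $2k$ uncovered link vertices forms an explicit small system: $v$ is occupied only if no uncovered pair is fully occupied. This gives
\[
\Pr(v\in I\mid J)=\frac{\lambda(1+2\lambda)^k}{\lambda(1+2\lambda)^k+(1+\lambda)^{2k}}.
\]
\item Count, in the same conditional picture, the expected number of occupied vertices in $L_v$. It is roughly $2k\lambda/(1+\lambda)$ times the appropriate factor, plus contributions from covered pairs.
\item Form the two local lower bounds: $\mathbb E|I|\ge\sum_v\Pr(v\in I)$, and $3\,\mathbb E|I|\cdot\Delta\ge\sum_v\mathbb E|I\cap\bigcup L_v|$, since each vertex lies in at most $\Delta$ links. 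Take a convex combination of the two to eliminate the dependence on $k$.
\item Optimize over $k$. With $\lambda\approx\Delta^{-1/2}$ (up to logarithmic factors), the worst case is $k\approx\sqrt{\Delta\log\Delta}$. Both quantities are then at least of order $\sqrt{\log\Delta/\Delta}$, which gives $\mathbb E|I|\ge c\,M\sqrt{\log\Delta/\Delta}$.
\end{enumerate}

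\textbf{Main obstacle and fallback.} The delicate step is the conditional-independence claim. Covered vertices, for which only one vertex of a pair is free, must be handled honestly, and one must check that Berge $4$-cycles (which are allowed) do not break the computation. They should not, because they only couple link vertices through $J$, which we condition on. If that step failed, I would instead run an R\"odl-nibble / semi-random scheme in the style of Ajtai--Koml\'os--Pintz--Spencer--Szemer\'edi and Duke--Lefmann--Rödl. In each round, include vertices with small probability, delete those that complete an edge, and track degrees, using local sparseness for the concentration of the link-pair counts.
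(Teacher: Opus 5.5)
There is a genuine gap, and it sits in Step 5, not where you expected it. Your conditioning step is actually fine. With no Berge $2$- or $3$-cycles, no edge avoiding $v$ lies inside $N[v]$ or meets two link pairs. So, by the spatial Markov property, the law of $I\cap N[v]$ given $J$ is exactly the star system you describe, and your formula for $\Pr(v\in I\mid J)$ is correct.

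The trouble is what that formula implies. Write $r=(1+\lambda)^2/(1+2\lambda)=1+\lambda^2/(1+2\lambda)$. Then $\Pr(v\in I\mid J)=\lambda/(\lambda+r^k)\approx\lambda e^{-k\lambda^2}$. Your second quantity satisfies $\frac{1}{3\Delta}\mathbb E[\,|I\cap\bigcup L_v|\mid J\,]\le(2k+s)\lambda/(3\Delta)$, where $s$ is the number of singly uncovered link vertices. Now consider the possible fugacities:
\begin{itemize}
\item If $\lambda=O(\Delta^{-1/2})$, both quantities are at most $\lambda$ for every $k$.
\item If $\Delta^{-1/2}\ll\lambda\le 1$, put $s=0$ and $k=\lceil 4\lambda^{-2}\log(\lambda^2\Delta)\rceil\le\Delta$. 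Both quantities are then $O\bigl(\log(\lambda^2\Delta)/(\lambda\Delta)\bigr)=O(\Delta^{-1/2})$.
\item For $\lambda>1$, taking $k=O(\log\Delta)$ already pushes both down to $O(\log\Delta/\Delta)$.
\end{itemize}
A convex combination never exceeds the larger of the two terms at a given $k$. Hence your two inequalities certify at most $O(M/\sqrt\Delta)$ for every choice of $\lambda$. That is the trivial bound, with no $\sqrt{\log\Delta}$ gain.

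In particular, your claim that the worst case is $k\approx\sqrt{\Delta\log\Delta}$ is mistaken. There $k\lambda^2\to0$, so $\Pr(v\in I\mid J)\approx\lambda$ and nothing is forced. The real worst case is $k$ of order $\Delta$ up to logarithms. For example, $\lambda=\Delta^{-1/2}$ and $k=\Delta$ give $\Pr(v\in I\mid J)\approx\lambda/e$ and a normalized link count of order $\lambda$.

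The root cause is an order mismatch with no analogue in the triangle-free graph case. An uncovered link pair can block $v$ only if \emph{both} of its vertices are occupied, so it costs about $\lambda^2$ in the exponent. Yet your counting inequality credits it only about $2\lambda/\Delta$. In graphs both effects are of order $\lambda$ per uncovered neighbour, and that matching is exactly what produces the logarithm. Rescuing the occupancy method would need a genuinely new second inequality, for instance one controlling edges that contain two occupied vertices, with a gain. It is not a bookkeeping issue.

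The paper itself gives no proof of this statement; it imports it from Verstraete--Wilson. If you want a self-contained justification, make your fallback the primary route, and it becomes essentially one line. A locally sparse $3$-graph is in particular linear. The Duke--Lefmann--R\"odl theorem says a linear $k$-graph on $N$ vertices with average degree at most $t^{k-1}$ has $\alpha\ge c_k(N/t)(\log t)^{1/(k-1)}$. Applying it with $k=3$ and $t=\sqrt\Delta$ gives the bound for large $\Delta$. Your trivial bound $\Omega(M/\sqrt\Delta)$ then covers bounded $\Delta$ after shrinking the constant. Frieze--Mubayi's colouring bound for simple hypergraphs would serve equally well.
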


\subsection{The Bad-Triple Hypergraph}

Having introduced the required hypergraph terminology, we now encode the obstruction to the perfect $3$-hash property as the edge set of a $3$-uniform hypergraph.

A set $\{x,y,z\}\subseteq[9]^N$ of three distinct words is called \emph{bad} if, for every $i\in[N]$, the symbols $x(i),y(i),z(i)$ are not pairwise distinct. In other words, it holds that 
\[
    |\{x(i),y(i),z(i)\}|\leq 2.
\]
Let $\mathcal H_N$ be the $3$-uniform hypergraph with vertex set $V(\mathcal H_N)=V_N:=[9]^N$ and edge set
\[
    E(\mathcal H_N)
    :=
    \left\{
        \{x,y,z\}\in\binom{V_N}{3}:
        \{x,y,z\}\text{ is bad}
    \right\}.
\]

This construction translates the outer-code problem directly into an independent-set problem.  Indeed, a subset $\mathcal D\subseteq V_N$ is independent in $\mathcal H_N$ if and only if it contains no bad triple.  Equivalently, every three distinct words in $\mathcal D$ have pairwise distinct symbols in at least one coordinate. Thus, the independent sets of $\mathcal H_N$ are precisely the $9$-ary perfect $3$-hash codes of length $N$. This implies that
\begin{equation}\label{eq:outer-alpha}
    H_{9,3}(N)=\alpha(\mathcal H_N).
\end{equation}
It therefore suffices to find a large independent set in $\mathcal H_N$.  The next subsection describes how we obtain such a set after passing to a suitably chosen locally sparse induced subhypergraph.

\subsection{Proof Strategy}

The hypergraph $\mathcal H_N$ is too dense for a direct application of Theorem~\ref{thm:local-sparse}.  We therefore retain each vertex independently with a suitably chosen exponentially small probability and consider the induced subhypergraph on the retained vertices. Coordinatewise counting will show that this random subhypergraph still contains exponentially many vertices, while its degree scale is substantially reduced.

We then delete a small exceptional set of vertices in order to control the maximum degree and eliminate all Berge cycles of lengths $2$ and $3$.  The resulting induced subhypergraph $\mathcal G$ remains exponentially large and is locally sparse.  We may therefore apply Theorem~\ref{thm:local-sparse} to obtain
\[
    \alpha(\mathcal G)
    \ge
    c_{\mathrm{VW}}\,|V(\mathcal G)|
    \sqrt{
        \frac{\log\Delta(\mathcal G)}
             {\Delta(\mathcal G)}
    }.
\]

The thinning probability is chosen so that its exponential contribution cancels between the vertex-count and maximum-degree terms.  Moreover, the estimates proved below will show that $\log\Delta(\mathcal G)=\Theta(N)$.  The logarithmic factor in the preceding independent-set bound therefore contributes an additional factor of order $\sqrt N$.
After substituting the resulting estimates for $|V(\mathcal G)|$ and $\Delta(\mathcal G)$, we will obtain an absolute constant $c_1>0$ such that
\[
    \alpha(\mathcal G)
    \ge
    c_1\sqrt N\left(\frac95\right)^N.
\]

Finally, since $\mathcal G$ is an induced subhypergraph of $\mathcal H_N$, every independent set in $\mathcal G$ is also independent in $\mathcal H_N$. Hence,
\[
    H_{9,3}(N)
    =
    \alpha(\mathcal H_N)
    \ge
    \alpha(\mathcal G)
    \ge
    c_1\sqrt N\left(\frac95\right)^N,
\]
which establishes Theorem~\ref{thm:outer-main}.

\section{Degree and Short-Cycle Estimates}
\label{sec:counts}

We now establish the degree and short-cycle estimates needed for the random sparsification argument.  The badness condition is imposed coordinatewise, and hence each relevant configuration count factors over the $N$ coordinates.  It therefore suffices to determine the corresponding one-coordinate count and then raise it to the $N$-th power.

\subsection{Maximum Degree}

We begin with the maximum degree of the bad-triple hypergraph.

\begin{lemma}\label{lem:degree}
Every vertex of $\mathcal H_N$ has degree at most $25^N$.  In particular,
\[
    \Delta(\mathcal H_N)\le 25^N.
\]
\end{lemma}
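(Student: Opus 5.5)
Fix a vertex $x\in V_N$. The plan is to bound the number of edges containing $x$ by the number of ordered pairs $(y,z)\in V_N\times V_N$ such that, in every coordinate $i\in[N]$, the three symbols $x(i),y(i),z(i)$ are not pairwise distinct. Each edge $\{x,y,z\}$ containing $x$ yields at least one (in fact two) such ordered pairs, so
\[
d_{\mathcal H_N}(x)\le \bigl|\{(y,z)\in V_N^2:\ |\{x(i),y(i),z(i)\}|\le 2\ \text{for all } i\in[N]\}\bigr|.
\]
We simply drop the distinctness requirement $y\ne z$, $y,z\ne x$; this only enlarges the count, and an upper bound is all we need.

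The key observation is that the condition on $(y,z)$ is a conjunction of independent conditions, one per coordinate. Therefore the set of admissible ordered pairs is a Cartesian product over coordinates, and its size equals $\prod_{i=1}^N m(x(i))$. Here $m(a)$ is the number of pairs $(b,c)\in[9]^2$ with $|\{a,b,c\}|\le 2$.

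The one-coordinate count is the only computation. For fixed $a\in[9]$, the pairs $(b,c)$ for which $a,b,c$ are pairwise distinct number $8\cdot 7=56$. The admissible pairs therefore number $81-56=25$. Equivalently, they are the pairs with $b=a$ (9 of them), or $c=a$ (9), or $b=c$ (9), with inclusion--exclusion giving $27-3+1=25$. This is independent of $a$, so the product equals $25^N$, and $d_{\mathcal H_N}(x)\le 25^N$ for every $x$, which gives $\Delta(\mathcal H_N)\le 25^N$.

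There is no real obstacle. The only point needing care is the direction of the overcount: passing from unordered edges to ordered pairs and dropping distinctness both increase the count, so the inequality goes the right way. One could even gain a factor $2$ from the ordering, but it is not needed.
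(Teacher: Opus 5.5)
Your proof is correct and follows the paper's argument: you overcount edges through $x$ by ordered pairs $(y,z)$, use the coordinatewise product structure, and compute the one-coordinate count $81-8\cdot 7=25$. Your inclusion--exclusion check $27-3+1=25$ is a nice independent confirmation of the same number.
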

\begin{proof}
Fix $x\in V_N$. We count ordered pairs $(y,z)\in V_N^2$ such that, in each coordinate $i\in [N]$, the symbols $x(i),y(i),z(i)$ are not pairwise distinct. This count includes pairs for which the three vertices are not distinct and counts each edge containing $x$ twice; hence it is an upper bound for the degree of $x$.

In a fixed coordinate $i\in [N]$, once $x(i)$ is fixed, there are $9^2$ possible choices for the ordered pair $(y(i),z(i))$. The forbidden choices are precisely those for which $x(i),y(i),z(i)$ are pairwise distinct, and there are $8\cdot 7$ such choices. Hence the number of admissible choices in one coordinate is $9^2-8\cdot7=25$.  Since the coordinate constraints are independent, the total number of admissible ordered pairs $(y,z)$ is $25^N$.  Therefore, $d_{\mathcal H_N}(x)\le 25^N$.  Since $x$ was arbitrary, the result follows.
\end{proof}

\begin{remark}
The preceding argument in fact gives the slightly stronger estimate $d_{\mathcal H_N}(x)\le 25^N/2$, because every edge containing $x$ contributes two ordered pairs. The coarser bound in Lemma~\ref{lem:degree} is sufficient for our purposes.
\end{remark}

\subsection{Short Berge Cycles}

We next estimate the number of Berge cycles of lengths $2$ and $3$.  Recall that a Berge $2$-cycle consists of two distinct edges sharing at least two vertices.  Consequently, once all Berge $2$-cycles have been removed from a $3$-uniform hypergraph, any two remaining edges intersect in at most one vertex. It follows that every Berge $3$-cycle in the resulting hypergraph spans exactly six vertices.  Indeed, its three edges contain three distinct intersection vertices and one additional vertex in each edge; the absence of Berge $2$-cycles forces these six vertices to be distinct.  It is therefore enough to count Berge $2$-cycles and six-vertex Berge $3$-cycles.

\begin{lemma}\label{lem:short-cycles}
For $j\in\{2,3\}$, define $K_j:=25^j+8\cdot7^j$. Specifically, 
\begin{align*}
    K_2 &=25^2+8\cdot7^2=9\cdot113,\\
    K_3 &=25^3+8\cdot7^3=18369.
\end{align*}
Then the following statements hold:
\begin{enumerate}[label=\textup{(\roman*)}]
    \item $\mathcal H_N$ contains at most $K_2^N= (9\cdot 113)^N$ Berge $2$-cycles;
    \item $\mathcal H_N$ contains at most $K_3^N= 18369^N$ Berge $3$-cycles whose three edges span exactly six vertices.
\end{enumerate}
\end{lemma}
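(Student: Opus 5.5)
The plan is to count, in both parts, ordered vertex tuples that satisfy a coordinatewise condition. Each relevant cycle will be mapped injectively to such a tuple. Because badness is imposed coordinatewise, the number of admissible tuples is the one-coordinate count raised to the $N$-th power, exactly as in Lemma~\ref{lem:degree}.

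\textbf{Part (i).} A Berge $2$-cycle consists of distinct edges $e_1,e_2$ and distinct vertices $v_1,v_2\in e_1\cap e_2$. Since $e_1\ne e_2$ are $3$-sets, they share exactly the two vertices $v_1,v_2$. So we may write $e_1=\{v_1,v_2,x\}$ and $e_2=\{v_1,v_2,y\}$. The map $(e_1,e_2,v_1,v_2)\mapsto(v_1,v_2,x,y)$ is injective into the set of ordered $4$-tuples $(u,v,x,y)\in V_N^4$ such that, in every coordinate, neither $\{u(i),v(i),x(i)\}$ nor $\{u(i),v(i),y(i)\}$ is pairwise distinct.

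In one coordinate we count $(a,b,c,d)\in[9]^4$ with this property:
\begin{itemize}
\item if $a=b$, then $c,d$ are free, giving $9\cdot 81=729$ choices;
\item if $a\ne b$ ($72$ choices), then $c,d\in\{a,b\}$, giving $72\cdot 4=288$ choices.
\end{itemize}
The total is $1017=25^2+8\cdot 7^2=K_2$. Independence across coordinates gives the bound $K_2^N$.

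\textbf{Part (ii).} Take a six-vertex Berge $3$-cycle $e_1,e_2,e_3$ with $v_i\in e_i\cap e_{i+1}$. Let $w_i$ be the remaining vertex of $e_i$, so that $e_1=\{v_3,v_1,w_1\}$, $e_2=\{v_1,v_2,w_2\}$ and $e_3=\{v_2,v_3,w_3\}$. The map to the ordered $6$-tuple $(v_1,v_2,v_3,w_1,w_2,w_3)$ is injective.

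The image lies in the set of tuples where, in each coordinate, each of the three triples $\{a_3,a_1,b_1\}$, $\{a_1,a_2,b_2\}$, $\{a_2,a_3,b_3\}$ fails to be pairwise distinct. Fix $(a_1,a_2,a_3)$. Each $b_i$ then has $9$ choices if its two paired $a$'s coincide, and $2$ choices otherwise. Splitting by the pattern of $(a_1,a_2,a_3)$:
\begin{itemize}
\item all equal: $9\cdot 9^3=6561$;
\item exactly two equal ($216$ triples, each with one coinciding pair): $216\cdot 9\cdot 2\cdot 2=7776$;
\item all distinct ($504$ triples): $504\cdot 2^3=4032$.
\end{itemize}
The total is $18369=K_3$, which gives the bound $K_3^N$.

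As a cross-check on the closed form $25^j+8\cdot 7^j$, both counts are traces $\operatorname{tr}(A^j)$ of the $9\times 9$ matrix $A$ with $A_{ab}=9$ if $a=b$ and $A_{ab}=2$ otherwise. This matrix has eigenvalues $25$ (once) and $7$ (eight times).

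The main point requiring care is the injectivity of the cycle-to-tuple maps, in particular the use of the six-vertex hypothesis in (ii). The one-coordinate counts themselves are routine.
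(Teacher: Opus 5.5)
Your proof is correct and follows the paper's argument: both inject the labelled cycles into ordered vertex tuples constrained coordinatewise and raise the one-coordinate count to the $N$-th power, and your explicit case split simply evaluates the paper's $\operatorname{tr}(M^j)$ with $M=7I+2J$. One small overstatement: the six-vertex hypothesis is not actually needed for injectivity in (ii), since each $e_i$ already contains two distinct designated vertices, so its third vertex is determined.
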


\begin{proof}
We count labelled configurations, allowing repeated vertices or edges. This can only increase the count and therefore gives an upper bound on the number of cycles.

For $j\in\{2,3\}$, we write such a labelled configuration as $e_i=\{v_i,v_{i+1},u_i\}$ for $i\in[j]$, where the indices are interpreted cyclically, so that $v_{j+1}=v_1$.  When $j=2$, this represents two edges with two common vertices.  When $j=3$, it represents a Berge $3$-cycle whose edges span six vertices.

Fix a coordinate $t\in[N]$, and write $a_i:=v_i(t)$ and $b_i:=u_i(t)$.  For $e_i$ to be bad in coordinate $t$, the three symbols $a_i,a_{i+1},b_i$ must not be pairwise distinct.  Once $a_i$ and $a_{i+1}$ are fixed, the number of admissible choices for $b_i$ is
\[
    M_{a_i,a_{i+1}}
    :=
    \begin{cases}
        9, & a_i=a_{i+1},\\
        2, & a_i\ne a_{i+1}.
    \end{cases}
\]
Indeed, if $a_i=a_{i+1}$, then every value of $b_i$ is admissible; if $a_i\ne a_{i+1}$, then $b_i$ must equal one of these two symbols.

Let $M=(M_{a,a'})_{a,a'\in[9]}$.  If $I$ and $J$ denote the $9\times9$ identity matrix and the $9\times9$ all-one matrix, respectively, then $M=7I+2J$.  Therefore, the number of admissible symbol assignments in coordinate $t$ is
\[
\begin{aligned}
    \sum_{a_1,\ldots,a_j\in[9]}
        M_{a_1,a_2}
        M_{a_2,a_3}
        \cdots
        M_{a_j,a_1}
        =\operatorname{tr}(M^j).
\end{aligned}
\]
Observe that the eigenvalues of $M$ are $25$ with multiplicity one and $7$ with multiplicity eight, then $\operatorname{tr}(M^j)=25^j+8\cdot7^j=K_j$.

Finally, the constraints in different coordinates are independent. Thus, the number of labelled configurations over all $N$ coordinates is at most $K_j^N$.  Taking $j=2$ and $j=3$ proves assertions \textup{(i)} and \textup{(ii)}, respectively.
\end{proof}

\section{Random Thinning and Local Sparsification}\label{sec:thinning}

We next extract a large locally sparse induced subhypergraph of $\mathcal H_N$. Fix $\beta=0.205$ and define
\begin{equation}\label{eq:parameters}
\begin{aligned}
        p&=\beta^N,\\
        \mu&=9^Np=(9\beta)^N,\\
        D&=25^Np^2=(25\beta^2)^N.
\end{aligned}
\end{equation}
Here $p$ is the vertex-retention probability, $\mu$ is the expected number of retained vertices, and $D$ is the natural degree scale after thinning.  The precise value of $\beta$ is immaterial; we only use
\begin{equation}\label{eq:beta-conditions}
        25\beta^2>1,
        \qquad
        113\beta^3<1,
        \qquad
        \frac{18369}{9}\beta^5<1.
\end{equation}
The first inequality ensures that $D$ grows exponentially with $N$, while the other two make the expected numbers of relevant short cycles negligible compared with $\mu$.

\begin{lemma}\label{lem:locally-sparse-subgraph}
For all sufficiently large $N$, there is a subset $R\subseteq V_N$ such that the induced hypergraph $\mathcal G=\mathcal H_N[R]$ satisfies
\begin{enumerate}[label=\textup{(\roman*)}]
    \item $|R|\ge\mu/4$;
    \item $\Delta(\mathcal G)\le16D$;
    \item $\mathcal G$ is locally sparse.
\end{enumerate}
\end{lemma}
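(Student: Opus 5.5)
The plan is a first-moment thinning-and-deletion argument. Let $S\subseteq V_N$ be a random subset in which each vertex is kept independently with probability $p=\beta^N$. We then delete a small exceptional set $B\subseteq S$ and take $R=S\setminus B$. The set $B$ will contain one vertex from every Berge $2$-cycle of $\mathcal H_N[S]$, one vertex from every six-vertex Berge $3$-cycle of $\mathcal H_N[S]$, and every vertex of $S$ whose degree in $\mathcal H_N[S]$ exceeds $16D$.

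Passing to the induced subhypergraph $\mathcal H_N[R]$ can only lower degrees. So every $v\in R$ satisfies $d_{\mathcal G}(v)\le d_{\mathcal H_N[S]}(v)\le 16D$, which gives (ii). For (iii), there are two cases. Any Berge $2$-cycle of $\mathcal G$ is a Berge $2$-cycle of $\mathcal H_N[S]$, and one of its vertices was deleted. Any Berge $3$-cycle of $\mathcal G$ either has two edges sharing two vertices, and so contains a Berge $2$-cycle, or spans exactly six vertices, by the observation preceding Lemma~\ref{lem:short-cycles}. In both cases some vertex of it was deleted, so $\mathcal G$ is locally sparse.

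It remains to show that with positive probability $|S|\ge\mu/2$ and $|B|<\mu/4$; these together give (i).

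\textbf{Size of $S$.} $|S|$ is binomial with mean $\mu=(9\beta)^N$ and variance at most $\mu$. Since $9\beta>1$, we have $\mu\to\infty$, and Chebyshev gives $\Pr(|S|<\mu/2)\le 4/\mu=o(1)$.

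\textbf{High-degree vertices.} Let $Y$ be the number of high-degree vertices. Then $Y\le (16D)^{-1}\sum_{v\in S}d_{\mathcal H_N[S]}(v)$. For a fixed $v$, each edge $\{v,y,z\}$ survives with probability $p^3$. Summing over vertices and using Lemma~\ref{lem:degree},
\[
\mathbb E\sum_{v\in S}d_{\mathcal H_N[S]}(v)\le 9^N\cdot 25^N p^3=\mu D .
\]
Hence $\mathbb E Y\le \mu/16$.

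\textbf{Short cycles.} A Berge $2$-cycle in a $3$-uniform hypergraph spans exactly four vertices, so it survives with probability $p^4$. By Lemma~\ref{lem:short-cycles}(i), the expected number of surviving $2$-cycles is at most $K_2^N\beta^{4N}$. Dividing by $\mu$ gives $(113\beta^3)^N=o(1)$. Similarly, a six-vertex $3$-cycle survives with probability $p^6$, and Lemma~\ref{lem:short-cycles}(ii) bounds the expected number of surviving ones by $\mu\,(18369\beta^5/9)^N=o(\mu)$. Both conclusions use the conditions in \eqref{eq:beta-conditions}.

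\textbf{Conclusion.} Combining these, $\mathbb E|B|\le \mu/16+o(\mu)$. Markov's inequality then gives $\Pr(|B|\ge\mu/4)\le 1/4+o(1)$. A union bound with the Chebyshev estimate shows that for large $N$ some outcome has $|S|\ge\mu/2$ and $|B|<\mu/4$, so $|R|\ge\mu/4$.

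The most delicate step is the degree control, and there are two points to get right. First, we must bound the number of deleted high-degree vertices by $\mu/16$ in expectation, not bound the probability that some vertex has high degree, because $D$ is not large enough for a union bound. Second, we must check that deletions do not destroy property (ii) or create new violations. Both follow from monotonicity under taking induced subhypergraphs.
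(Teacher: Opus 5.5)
Your proposal is correct and follows the paper's proof: random thinning with $p=\beta^N$, a first-moment bound on the number of high-degree vertices, first-moment bounds on Berge $2$-cycles and six-vertex Berge $3$-cycles via Lemma~\ref{lem:short-cycles}, and then deletion. Your double count of $\sum_{v\in S}d_{\mathcal H_N[S]}(v)$ is the paper's per-vertex Markov step summed over $x$, and the only other differences are cosmetic: Chebyshev in place of Chernoff for $|S|$, and a Markov bound of $1/4+o(1)$ rather than $1/2$ in the final union bound.
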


Before proceeding to the proof, we recall two standard probabilistic tools: the Chernoff bound and Markov's inequality.

\begin{lemma}[Chernoff bound]\label{lem:chernoff}
Let $X=\sum_{i=1}^m X_i$, where $X_1,\ldots,X_m$ are independent Bernoulli random variables, and let $\mathbb E X=\mu$. Then, for every $0<\delta<1$,
\[
        \mathbb P\bigl(X\le (1-\delta)\mu\bigr)
        \le
        \exp\!\left(-\frac{\delta^2\mu}{2}\right).
\]
\end{lemma}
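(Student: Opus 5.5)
The plan is the standard exponential-moment (Cramér--Chernoff) argument for the lower tail. Write $p_i:=\mathbb P(X_i=1)$, so that $\mu=\sum_i p_i$. For a parameter $t>0$ to be chosen later, the event $X\le(1-\delta)\mu$ coincides with $e^{-tX}\ge e^{-t(1-\delta)\mu}$. Markov's inequality applied to the nonnegative variable $e^{-tX}$ then gives
\[
    \mathbb P\bigl(X\le(1-\delta)\mu\bigr)\le e^{t(1-\delta)\mu}\,\mathbb E\,e^{-tX}.
\]

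Next I use independence to factor the moment generating function: $\mathbb E\,e^{-tX}=\prod_i\mathbb E\,e^{-tX_i}$. Each factor equals $1-p_i(1-e^{-t})$, and the inequality $1+u\le e^u$ bounds it by $\exp(-p_i(1-e^{-t}))$. Multiplying over $i$ gives $\mathbb E\,e^{-tX}\le\exp(-\mu(1-e^{-t}))$.

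I then optimize by choosing $t=-\ln(1-\delta)>0$, which is valid since $0<\delta<1$. With this choice $1-e^{-t}=\delta$, and the bound becomes
\[
    \left(\frac{e^{-\delta}}{(1-\delta)^{1-\delta}}\right)^{\mu}
    =\exp\Bigl(-\mu\bigl[\delta+(1-\delta)\ln(1-\delta)\bigr]\Bigr).
\]

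It remains to check the elementary inequality $\delta+(1-\delta)\ln(1-\delta)\ge\delta^2/2$ on $(0,1)$. I expect this to be the only step needing care, although it is routine. Set $f(\delta):=(1-\delta)\ln(1-\delta)+\delta-\delta^2/2$. Then $f(0)=0$ and
\[
    f'(\delta)=-\ln(1-\delta)-\delta,
\]
which is nonnegative because $-\ln(1-\delta)=\sum_{k\ge1}\delta^k/k\ge\delta$. Hence $f\ge0$ on $(0,1)$. Substituting this inequality into the exponent yields $\mathbb P(X\le(1-\delta)\mu)\le\exp(-\delta^2\mu/2)$, as claimed.
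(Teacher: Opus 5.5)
Your proof is correct and complete. The Markov step applied to $e^{-tX}$, the factorization bound $\mathbb E\,e^{-tX}\le\exp(-\mu(1-e^{-t}))$ via $1+u\le e^u$, the choice $t=-\ln(1-\delta)$, and the check that $\delta+(1-\delta)\ln(1-\delta)\ge\delta^2/2$ (using $f'(\delta)=-\ln(1-\delta)-\delta\ge0$) all go through.

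The paper gives no proof of this lemma. It only recalls it as a standard tool, and uses it only with $\delta=1/2$ to get $\exp(-\mu/8)$. Your standard Cram\'er--Chernoff derivation is exactly the argument being implicitly invoked, so there is nothing to compare.
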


\begin{lemma}[Markov's inequality]\label{lem:markov}
Let $Y$ be a nonnegative random variable. Then, for every $a>0$,
\[
        \mathbb P(Y\ge a)\le \frac{\mathbb E Y}{a}.
\]
\end{lemma}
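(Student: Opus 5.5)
We prove Markov's inequality directly from the monotonicity of expectation, comparing $Y$ with a suitable multiple of an indicator variable.

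Fix $a>0$ and let $\mathbf 1_{\{Y\ge a\}}$ be the indicator of the event $\{Y\ge a\}$. The key step is the pointwise inequality
\[
    a\,\mathbf 1_{\{Y\ge a\}}\le Y,
\]
which holds on every outcome of the underlying probability space. We verify it in two cases.

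\begin{itemize}
    \item On the event $\{Y\ge a\}$, the left-hand side equals $a$, and $a\le Y$ by definition of the event.
    \item On the complement, the left-hand side is $0$, and $0\le Y$ because $Y$ is nonnegative.
\end{itemize}

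Nonnegativity of $Y$ is used only in the second case. It is essential there; without it the statement is false.

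Next we take expectations of both sides. Expectation is monotone on nonnegative random variables, and this remains valid when $\mathbb E Y=+\infty$, in which case the inequality is trivial. Since $\mathbb E\,\mathbf 1_{\{Y\ge a\}}=\mathbb P(Y\ge a)$, we obtain
\[
    a\,\mathbb P(Y\ge a)\le \mathbb E Y.
\]

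Finally, we divide by $a>0$, which gives the stated bound $\mathbb P(Y\ge a)\le \mathbb E Y/a$.

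None of these steps presents a real obstacle. The only point needing care is the pointwise comparison, together with noting that it relies on $Y\ge 0$ and on $a>0$. The rest is a direct consequence of linearity and monotonicity of expectation.
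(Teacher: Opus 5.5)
Your argument is correct: the pointwise bound $a\,\mathbf 1_{\{Y\ge a\}}\le Y$ followed by monotonicity of expectation is the standard proof of Markov's inequality. The paper gives no proof of this lemma and simply recalls it as a standard tool before the proof of Lemma~\ref{lem:locally-sparse-subgraph}, so your argument supplies the routine justification the paper takes for granted.
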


\begin{proof}[Proof of Lemma \ref{lem:locally-sparse-subgraph}]
Choose a random subset $S\subseteq V_N$ by retaining each vertex independently with probability $p$, and consider the induced hypergraph $\mathcal H_N[S]$.   Let $X=|S|$.  Then $\mathbb E X=\mu$, and the Chernoff bound gives
\begin{equation}\label{eq:chernoff-S}
        \mathbb P(X<\mu/2)
        \le \exp(-\mu/8)
        =o(1).
\end{equation}

Let $Z_2$ be the number of Berge $2$-cycles in $\mathcal H_N[S]$, and let $Z_3$ be the number of Berge $3$-cycles in $\mathcal H_N[S]$ whose edges span exactly
six vertices.  For $j\in\{2,3\}$, it follows by Lemma~\ref{lem:short-cycles} that
\begin{equation}\label{eq:cycle-expectation}
\begin{aligned}
        \mathbb E Z_j
        \le p^{2j}K_j^N=\mu
          \left(\frac{K_j}{9}\beta^{2j-1}\right)^N.
\end{aligned}
\end{equation}
Since $K_2=9\cdot113$, the second inequality in \eqref{eq:beta-conditions} gives
\[
        \mathbb E Z_2
        \le \mu(113\beta^3)^N
        =o(\mu).
\]
Similarly, $K_3=18369$, and the third inequality in \eqref{eq:beta-conditions} gives
\[
        \mathbb E Z_3
        \le
        \mu\left(\frac{18369}{9}\beta^5\right)^N
        =o(\mu).
\]

It remains to control vertices of abnormally large degree.  Fix a vertex $x\in V_N$.  Conditional on $x\in S$, every edge of $\mathcal H_N$ containing $x$ survives with probability $p^2$. It then follows by  Lemma~\ref{lem:degree} that
\[
        \mathbb E\bigl[d_{\mathcal H_N[S]}(x)\mid x\in S\bigr]
        \le p^2 25^N
        =D.
\]
By Markov's inequality,
\[
\begin{aligned}
        \mathbb P\bigl(x\in S
        \text{ and }d_{\mathcal H_N[S]}(x)>16D\bigr)
        \le
        p\,\frac{D}{16D}
        =\frac{p}{16}.
\end{aligned}
\]
Let $Y$ denote the number of vertices $x\in S$ with $d_{\mathcal H_N[S]}(x)>16D$.  Summing over all $x\in V_N$, we obtain
\begin{equation}\label{eq:high-degree-expectation}
        \mathbb E Y\le\frac{\mu}{16}.
\end{equation}

The estimates above imply that, for all sufficiently large $N$,
\[
        \mathbb E(Y+Z_2+Z_3)\le\frac{\mu}{8}.
\]
Further applying Markov's inequality gives
\[
        \mathbb P(Y+Z_2+Z_3>\mu/4)\le\frac12.
\]
Together with \eqref{eq:chernoff-S}, this shows that, with positive probability,
\begin{equation}\label{eq:good-realization}
        |S|\ge\frac{\mu}{2},
        \qquad
        Y+Z_2+Z_3\le\frac{\mu}{4}.
\end{equation}

Fix a realization of $S$ satisfying \eqref{eq:good-realization}. First remove every vertex from $S$ whose degree in $\mathcal H_N[S]$ exceeds $16D$.  From the remaining set, remove one vertex from each Berge $2$-cycle and from each six-vertex Berge $3$-cycle.  Let $R$ be the set of vertices that remain.  At most $Y+Z_2+Z_3$ vertices are deleted, so
\[
        |R|
        \ge |S|-(Y+Z_2+Z_3)
        \ge\frac{\mu}{4}.
\]
Moreover, $\Delta(\mathcal H_N[R])\le 16D$, since $\mathcal H_N[R]$ is an induced subhypergraph of $\mathcal H_N[S]$ and all vertices of degree greater than $16D$ in $\mathcal H_N[S]$ were removed.

It remains to verify that $\mathcal H_N[R]$ is locally sparse.  By construction, $\mathcal H_N[R]$ contains no Berge $2$-cycle and no Berge $3$-cycle whose three edges span exactly six vertices. Suppose that $\mathcal H_N[R]$ contains a Berge $3$-cycle. Since there is no Berge $2$-cycle, any two edges of this Berge $3$-cycle intersect in exactly one vertex. In a $3$-uniform hypergraph, this forces the three edges to span exactly six vertices, which leads to a contradiction. Therefore, $\mathcal H_N[R]$ is locally sparse and the proof is completed.
\end{proof}

\section{Proof of The Main Lower Bound}\label{sec:proof-main}

We now combine Lemma~\ref{lem:locally-sparse-subgraph} with Theorem~\ref{thm:local-sparse} to prove Theorem \ref{thm:outer-main}.

\begin{proof}[Proof of Theorem~\ref{thm:outer-main}]
Let $R\subseteq V_N$ be supplied by Lemma~\ref{lem:locally-sparse-subgraph}, and set $\mathcal G=\mathcal H_N[R]$ and $M=|R|$.  Then
\[
        M\ge\frac{(9\beta)^N}{4},
        \qquad
        \Delta(\mathcal G)\le16(25\beta^2)^N.
\]
Because $25\beta^2>1$, the quantity $D=(25\beta^2)^N$ tends to infinity. Set $\Delta_0=16D$.  For all sufficiently large $N$, we have $\Delta_0\ge2$.  By Lemma \ref{lem:locally-sparse-subgraph}, $\mathcal G$ is locally sparse. Then Theorem~\ref{thm:local-sparse} gives
\[
        \alpha(\mathcal G)
        \ge
        c_{\mathrm{VW}}M
        \sqrt{\frac{\log\Delta_0}{\Delta_0}}.
\]
Since $\log\Delta_0=N\log(25\beta^2)+O(1)=\Theta(N)$, there is an absolute constant $c_1>0$ such that
\begin{align*}
        \alpha(\mathcal G)
        &\ge
        c_1(9\beta)^N
        \sqrt{\frac{N}{(25\beta^2)^N}}\\
        &=c_1\sqrt N
          \left(\frac{9\beta}{5\beta}\right)^N\\
        &=c_1\sqrt N\left(\frac95\right)^N.
\end{align*}
Since $\mathcal G$ is an induced subhypergraph of $\mathcal H_N$, every independent set in $\mathcal G$ is also independent in $\mathcal H_N$.  Equation \eqref{eq:outer-alpha} now gives
\[
        H_{9,3}(N)
        =\alpha(\mathcal H_N)
        \ge\alpha(\mathcal G)
        \ge c_1\sqrt N\left(\frac95\right)^N,
\]
which proves Theorem~\ref{thm:outer-main}.
\end{proof}

\begin{proof}[Proof of Theorem~\ref{thm:main}]
For every sufficiently large $N$, Equation~\eqref{eq:intro-concat} and Theorem~\ref{thm:outer-main} yield
\[
        T(4N)
        \ge H_{9,3}(N)
        \ge c_1\sqrt N\left(\frac95\right)^N.
\]
Now let $n$ be arbitrary and set $N=\lfloor n/4\rfloor$.  Append $n-4N$ fixed coordinates to each word of a trifferent code of length $4N$.  This operation preserves trifference, and hence $T(n)\ge T(4N)$. For all sufficiently large $n$, we have $N\ge n/8$ and $0\le n-4N\le3$.  Consequently,
\[
\begin{aligned}
        \sqrt N\left(\frac95\right)^N
        &\ge
        \frac{1}{\sqrt8}
        \left(\frac95\right)^{-3/4}
        \sqrt n\left(\frac95\right)^{n/4}.
\end{aligned}
\]
Absorbing the fixed factor into the constant proves that
\[
        T(n)
        \ge c\sqrt n\left(\frac95\right)^{n/4}
\]
for all sufficiently large $n$.
\end{proof}

\section{Conclusion and Future Directions}
\label{sec:conclusion}

We proved the first unbounded-factor improvement on the classical K\"orner--Marton lower bound for trifferent codes. Specifically, for all sufficiently large $n$, 
\[
    T(n)\ge c\sqrt n\left(\frac95\right)^{n/4}.
\] 
Our proof preserves the length-four Tetra inner code and improves the outer-code construction. We model bad triples by a $3$-uniform
hypergraph, apply random thinning, and remove high-degree vertices and short Berge cycles. The resulting locally sparse hypergraph admits a large independent set by the theorem of Verstraete and Wilson. The thinning parameter cancels from the exponential term, while the logarithmic factor in the independence bound produces the additional factor $\sqrt n$.

The same framework extends naturally to generalized $m$-trifferent codes. Fix an integer $m\ge 1$ and an outer block length $N$. We define a $3$-uniform hypergraph $\mathcal H_{N,m}$ on the vertex set $[9]^N$ by declaring three distinct words $x,y,z\in[9]^N$ to form an edge whenever they are separated in at most $m-1$ coordinates. Thus, the independent sets of $\mathcal H_{N,m}$ are exactly the $9$-ary codes in which every three distinct codewords are separated in at least $m$ coordinates. For each fixed $m$, the degree and short-cycle counts have the same exponential growth rates as those obtained in Section~\ref{sec:counts}, with only additional polynomial factors in $N$.
Consequently, the random thinning and local sparsification arguments apply with only minor modifications and yield a constant $c_m>0$, depending only on $m$, such that
\[
    \alpha(\mathcal H_{N,m})
    \ge
    c_m N^{1-m/2}
    \left(\frac95\right)^N
\]
for all sufficiently large $N$. Concatenating an independent set in $\mathcal H_{N,m}$ with the length-four Tetra code then produces a ternary $m$-trifferent code of length $4N$, and hence
\[
    T(4N,m)
    \ge
    c_m N^{1-m/2}
    \left(\frac95\right)^N.
\]
Finally, taking $N=\lfloor n/4\rfloor$ and appending at most three fixed coordinates, which preserves the $m$-trifference property, gives a constant $c_m'>0$ such that
\[
    T(n,m)
    \ge
    c_m' n^{1-m/2}
    \left(\frac95\right)^{n/4}
\]
for all sufficiently large $n$.

A natural question is whether the polynomial factor in our lower bound can be further improved. An even more challenging direction is to improve the exponential base, either in the lower bound or in the corresponding upper bound. It would also be interesting to investigate generalized $m$-trifferent codes and $q$-perfect hash codes. Finally, since our argument is nonconstructive, obtaining an explicit construction with a comparable polynomial improvement remains an important open problem.


\begin{thebibliography}{99}

\bibitem{ABNR22}
G.~N.~Alfarano, M.~Borello, A.~Neri, and A.~Ravagnani, ``Three combinatorial perspectives on minimal codes,'' \emph{SIAM Journal on Discrete Mathematics}, vol.~36, no.~1, pp.~461--489, 2022.

\bibitem{Arikan94}
E.~Ar{\i}kan, ``An upper bound on the zero-error list-coding capacity,'' \emph{IEEE Transactions on Information Theory}, vol.~40, no.~4, pp.~1237--1240, Jul.~1994.


\bibitem{ABDN24}
N.~Alon, A.~Bishnoi, S.~Das, and A.~Neri, ``Strong blocking sets and minimal codes from expander graphs,'' \emph{Transactions of the American Mathematical Society}, vol.~377, no.~8, pp.~5389--5410, 2024.

\bibitem{BBKT10}
A.~Barg, G.~R.~Blakley, G.~Kabatiansky, and C.~Tavernier, ``Robust parent-identifying codes,'' in \emph{2010 IEEE Information Theory Workshop}, 2010, pp.~1--4.

\bibitem{BK25}
S.~Bhandari and A.~Khetan, ``Improved upper bound for the size of a trifferent code,'' \emph{Combinatorica}, vol.~45, Art.~no.~2, 2025.

\bibitem{BR22}
S.~Bhandari and J.~Radhakrishnan, ``Bounds on the zero-error list-decoding capacity of the $q/(q-1)$ channel,'' \emph{IEEE Transactions on Information Theory}, vol.~68, no.~1, pp.~238--247, 2022.

\bibitem{BDGP24}
A.~Bishnoi, J.~D'haeseleer, D.~Gijswijt, and A.~Potukuchi, ``Blocking sets, minimal codes and trifferent codes,'' \emph{Journal of the London Mathematical Society}, vol.~109, no.~6, Art.~no.~e12938, 2024.

\bibitem{BKKNSVZ26}
A.~Bishnoi, B.~Kielak, B.~Kov\'acs, Z.~L.~Nagy, G.~Somlai, M.~Vizer, and Z.~Zheng, ``The generalized trifference problem,'' \emph{IEEE Transactions on Information Theory}, vol.~72, no.~5, pp.~2907--2914, 2026

\bibitem{DGR20}
M.~Dalai, V.~Guruswami, and J.~Radhakrishnan, ``An improved bound on the zero-error list-decoding capacity of the
$4/3$ channel,'' \emph{IEEE Transactions on Information Theory}, vol.~66, no.~2, pp.~749--756, 2020.

\bibitem{DFD24}
S.~D.~Fiore and M.~Dalai, ``Upper bounds on the rate of linear $q$-ary $k$-hash codes,'' in \emph{2024 IEEE International Symposium on Information Theory}, 2024, pp.~2610--2615.

\bibitem{DFGP22}
S.~D.~Fiore, A.~Gnutti, and S.~Polak, ``The maximum cardinality of trifferent codes with lengths 5 and 6,'' \emph{Examples and Counterexamples}, vol.~2, Art.~no.~100051, 2022.

\bibitem{Elias88}
P.~Elias, ``Zero error capacity under list decoding,'' \emph{IEEE Transactions on Information Theory}, vol.~34, no.~5, pp.~1070--1074, 1988.

\bibitem{FK84}
M.~L. Fredman and J.~Koml{\'o}s, ``On the size of separating systems and families of perfect hash functions,'' \emph{SIAM Journal on Algebraic and Discrete Methods}, vol.~5, no.~1, pp.~61--68, 1984.

\bibitem{GR22}
V.~Guruswami and A.~Riazanov, ``Beating Fredman--Koml{\'o}s for perfect $k$-hashing,'' \emph{Journal of Combinatorial Theory, Series A}, vol.~188, Art.~no.~105580, 2022.

\bibitem{HN21}
T.~H{\'e}ger and Z.~L.~Nagy, ``Short minimal codes and covering codes via strong blocking sets in projective spaces,'' \emph{IEEE Transactions on Information Theory}, vol.~68, no.~2, pp.~881--890, 2021.

\bibitem{Korner73}
J.~K{\"o}rner, ``Coding of an information source having ambiguous alphabet and the entropy of graphs,'' in \emph{6th Prague Conference on Information Theory}, 1973, pp.~411--425.

\bibitem{KM88}
J.~K{\"o}rner and K.~Marton, ``New bounds for perfect hashing via information theory,'' \emph{European Journal of Combinatorics}, vol.~9, no.~6, pp.~523--530, 1988.

\bibitem{Kur24}
S.~Kurz, ``Trifferent codes with small lengths,'' \emph{Examples and Counterexamples}, vol.~5, Art.~no.~100139, 2024.

\bibitem{PZ22}
C.~Pohoata and D.~Zakharov, ``On the trifference problem for linear codes,'' \emph{IEEE Transactions on Information Theory}, vol.~68, no.~11, pp.~7096--7099, Nov.~2022.

\bibitem{VW26}
J.~Verstraete and C.~Wilson, ``Independent sets in hypergraphs,'' \emph{Random Structures \& Algorithms}, vol.~68, no.~1, Art.~no.~e70047, 2026.

\bibitem{WX01}
H.~Wang and C.~Xing, ``Explicit constructions of perfect hash families from algebraic curves over finite fields,'' \emph{Journal of Combinatorial Theory, Series A}, vol.~93, no.~1, pp.~112--124, 2001.

\bibitem{Zaverucha10}
G.~Zaverucha, ``Hash families and cover-free families with cryptographic applications,'' Ph.D. dissertation, University of Waterloo, Waterloo, ON, Canada, 2010.

\end{thebibliography}
\end{document}